\theoremstyle{change}
\newtheorem{defn}[equation]{Definition}
\newtheorem{thm}[equation]{Theorem}
\newtheorem{lemma}[equation]{Lemma}
\theoremstyle{nonumberplain}
\newtheorem{proof}{Proof}
\crefname{equation}{equation}{equations}
\crefname{eg}{example}{examples}
\crefname{defn}{definition}{definitions}
\crefname{prop}{proposition}{propositions}
\crefname{thm}{Theorem}{Theorems}
\crefname{lemma}{lemma}{lemmas}
\crefname{cor}{corollary}{corollaries}
\crefname{remark}{remark}{remarks}
\crefname{section}{Section}{Sections}
\crefname{subsection}{Section}{Sections}
\numberwithin{equation}{section}
\tikzset{dot/.style={circle,draw,fill,inner sep=1pt}}
\newcommand\Aa{{\cal A}}
\newcommand\Oo{{\cal O}}
\newcommand\Uu{{\cal U}}
\newcommand\NN{{\mathbb N}}
\newcommand\RR{{\mathbb R}}
\newcommand\Ddd{\mathscr{D}}
\renewcommand{\d}{{\,\rm d}}
\newcommand\T{{\rm T}}
\newcommand\mono{\hookrightarrow}
\newcommand\sminus{\smallsetminus}
\newcommand\st{{\textrm{ s.t.\ }}}
\DeclareMathOperator{\Aut}{Aut}
\DeclareMathOperator{\dVol}{dVol}
\DeclareMathOperator{\ev}{ev}
\DeclareMathOperator{\fiber}{fiber}
\DeclareMathOperator{\GL}{GL}
\DeclareMathOperator{\id}{id}
\DeclareMathOperator{\sign}{sign}
\DeclareMathOperator{\tr}{tr}
\newcommand{\define}[1]{{\bf #1}}
\title{On the coordinate (in)dependence of the formal path integral}
\author{Theo Johnson-Freyd}
\begin{document}
\maketitle

\begin{abstract}
  When path integrals are discussed in quantum field theory, it is almost always assumed that the fields take values in a vector bundle.  When the fields are instead valued in a possibly-curved fiber bundle, the independence of the formal path integral on the coordinates becomes much less obvious.  In this short note, aimed primarily at mathematicians, we first briefly recall the notions of Lagrangian classical and quantum field theory and the standard coordinate-full definition of the ``formal'' or ``Feynman-diagrammatic'' path integral construction.  We then outline a proof of the following claim: the formal path integral does not depend on the choice of coordinates, but only on a choice of fiberwise volume form.  Our outline is an honest proof when the formal path integral is defined without ultraviolet divergences.
\end{abstract}

\section{Introduction}

Feynman's path integral construction has been an extremely valuable tool in physics and mathematics \cite{Witten1989,KolyaShort}: given a classical field theory, it defines a quantum field theory in terms of a usually ill-defined infinite-dimensional integral.  Because analytic definitions of this integral have proven difficult, the definition that has been the most used in physics and that in mathematics (e.g.\ \cite{BN1995}) is of the ``formal'' or ``Feynman-diagrammatic'' path integral.

Classical field theory makes sense on any smooth fiber bundle, but the literature on quantum field theory almost always assumes that the fibers are affine spaces.  In the quantization of such ``linear'' field theories, the analytically problematic choice of measure of integration can be safely ignored, as finite-dimensional vector space have unique translation-invariant measures, up to a scalar.  However, when quantizing non-linear field theories, one must choose a measure, and it is not clear that the formal path integral is constant under volume-preserving changes of coordinates.

In \cref{section1} we very briefly describe classical and quantum Lagrangian field theory, and state our main theorem: the formal path integral does not depend on the choice of coordinates, but only on a choice of fiberwise volume form on the corresponding bundle.  In \cref{section2} we review the Feynman-diagrammatic description of the asymptotics of finite-dimensional oscillating integrals and the corresponding definition of formal integration.  The proof of the main result comprises \cref{section3,section4}.
Our argument is essentially formal, and so applies whenever the Feynman diagrams are defined and satisfy certain natural identities.

\section{Review of Lagrangian field theory, and the main theorem} \label{section1}

Classical Lagrangian field theory, in its simplest case, begins with a smooth bundle $\pi: P\to X$, where $X$ is a finite-dimensional manifold with boundary $\partial X$.  Let $\Gamma(P\to X)$ be the set of smooth\footnote{Mutatis mutandis, one can consider sections of different regularity, or let $P\to X$ be an arbitrary submersion.} sections of $P \to X$.  For each \define{Dirichlet boundary condition} $\phi \in \Gamma(\partial P \to \partial X)$, let $\Gamma_\phi(P\to X) = \{\gamma \in\Gamma(P\to X) \st \gamma|_{\partial X} = \phi\}$.  Let $\T^{\fiber} P = \ker\{ \! \d \pi: \T P \to \T X\}$.
Then $\Gamma(P\to X)$ and $\Gamma_\phi(P\to X)$ are infinite-dimensional smooth manifolds: the tangent space $\T_\gamma\bigl(\Gamma(P \to X)\bigr)$ is the space of smooth sections of the pullback vector bundle $\gamma^*\bigl(\T^{\fiber}P\bigr) \to X$, and
$\T_\gamma\bigl(\Gamma_\phi(P \to X)\bigr) = \bigl\{ \xi \in \T_\gamma\bigl(\Gamma(P \to X)\bigr) \st \xi|_{\partial X} = 0\bigr\}$.
An \define{action} on $P \to X$ is a smooth function $\Aa: \Gamma(P\to X) \to \RR$.
The \define{classical Lagrangian field theory} for the bundle $P\to X$ and action $\Aa$
assigns to each $\phi \in \Gamma(\partial P \to \partial X)$ the {\em set} of critical points of the restriction to $\Gamma_{\phi}(P\to X)$ of the action $\Aa$.

Quantum Lagrangian field theory requires in addition to the inputs $P\to X$ and $\Aa: \Gamma(P \to X) \to \RR$ one more piece of classical data.  Recall that for a finite-dimensional manifold $M$, its \define{density bundle} is the trivial one-dimensional vector bundle $\Ddd_M \to M$ given in local coordinates $m,\tilde m: M \to \RR^{\dim M}$ by the transition functions $\tilde \mu = \bigl| \det \frac{\partial m}{\partial \tilde m}\bigr| \mu$.  (When $M$ is oriented, $\Ddd_M = \bigwedge^{\text{top}}\T^*M$.)  Given a bundle $\pi: P \to X$, its \define{fiberwise density bundle} $\Ddd_{\fiber}\to P$ is the bundle that restricts to the density bundles $\Ddd_{\pi^{-1}(x)} \to \pi^{-1}(x)$ on the fibers ($\Ddd_{\fiber} = \bigwedge^{\dim P - \dim X}\T^{\fiber} P$ when the fibers are oriented).  The additional datum for the quantum theory is a smooth section $\dVol \in \Gamma(\Ddd_{\fiber} \to P)$.  Then Feynman \cite{FeynmanHibbs1965} says that the quantum theory should assign to each Dirichlet boundary condition $\phi$ and each $\hbar \in \RR_{\neq 0}$ the following {\em complex number}:
\begin{equation} \label{FeynmanIntegral}
  \Uu^\hbar(\phi) = \int_{\gamma \in \Gamma_\phi(P\to X)} \exp\biggl( \frac i \hbar \,\Aa(\gamma)\biggr)\prod_{x\in X \sminus \partial X} \dVol\bigl(\gamma(x)\bigr)
\end{equation}

The integral in \cref{FeynmanIntegral} is analytically ill-defined: $\Gamma_\phi(P \to X)$ is infinite-dimensional unless $(\dim X)(\dim P - \dim X) = 0$, and the ``measure'' $\d\gamma = \prod_{x\in X \sminus \partial X} \dVol\bigl(\gamma(x)\bigr)$ is an infinite product.  The usual approach to algebraically defining this integral follows the following steps:
\begin{enumerate}
  \item Restrict the goal to a definition when $\hbar$ is a formal variable.  Impose the stationary-phase approximation that as $\hbar \to 0$, the integral should be supported near critical points of $\Aa|_{\Gamma_\phi(P\to X)}$.  Pick one such critical point $\gamma$, and suppose that the Hessian of $\Aa|_{\Gamma_\phi(P\to X)}$ at $\gamma$, a map $\Aa^{(2)}: \T_\gamma\bigl(\Gamma_\phi(P\to X)\bigr) \to \T^*_\gamma\bigl(\Gamma_\phi(P\to X)\bigr)$, has trivial kernel. 
  \item Choose ``coordinates near $\gamma$,'' i.e.\ an identification of bundles over $X$ between some open subbundle of $\gamma^*\bigl(\T^{\fiber}P\bigr)\to X$ and an open neighborhood of the image of $\gamma$ in $P \to X$.  Using \cite{Moser1965}, choose these coordinates to be \define{volume preserving}: the fiberwise volume form $\dVol \in \Gamma( \Ddd_{\fiber} \to P)$ determines a translation-invariant fiberwise volume form $\dVol(\gamma)$ on $\gamma^*\bigl(\T^{\fiber}P\bigr)$, and these should agree under the identification.
  \item Copy the usual Feynman-diagrammatic description of the asymptotics of oscillating integrals over finite-dimensional vector spaces (see \cref{section2}) to the infinite-dimensional setting.  The analytic difficulties of this step --- one must evaluate an infinite-dimensional determinant and many infinite-dimensional traces --- are real, but we will suppose that they are dealt with in ways that do not interfere with the naive diagrammatic manipultations we use in \cref{section3}.
\end{enumerate}
Steps 1--3 define a formal power series $\Uu_\gamma(\phi)$ in $\hbar$, which we think of as the part of \cref{FeynmanIntegral} supported near $\gamma$.

The vertices in the Feynman diagrams in step 3 above are well-defined only once we have chosen a linearization in step 2.  Thus a priori the formal path integral depends on this linearization.  Nevertheless, in this paper we outline a proof of the following theorem:
\begin{thm} \label{MainThm}
  Given classical data $P \to X$, $\Aa: \Gamma(P \to X) \to \RR$, $\dVol \in \Gamma(\Ddd_{\fiber} \to P)$, $\phi \in \Gamma(\partial P \to \partial X)$, and a critical section $\gamma \in \Gamma_\phi(P \to X)$, the formal path integral $\Uu_\gamma(\phi)$ is independent of the choices in step 2 above.
\end{thm}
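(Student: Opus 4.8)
The plan is to show that any two admissible choices in step 2 are joined by a path of admissible choices, and that the formal path integral is constant along such a path because its derivative along it is the formal integral of a total derivative.

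For the first step, write $E:=\gamma^*(\T^{\fiber}P)$ and let $\mu$ be the translation-invariant fiberwise density $\dVol(\gamma)$ on $E$; a choice in step 2 is a fiberwise embedding $\psi$ of an open subbundle of $E$ onto a neighborhood of $\gamma$ in $P$ carrying the zero section to $\gamma$ and $\mu$ to $\dVol$, and $\Uu^{\psi}_\gamma(\phi)$ is by construction the Feynman expansion of $\Aa\circ\psi$ with respect to $\mu$ about the nondegenerate critical point $\gamma$. Given two such choices $\psi_0,\psi_1$, I would form $\Phi:=\psi_0^{-1}\circ\psi_1$, a volume-preserving fiberwise diffeomorphism of a neighborhood of the zero section of $E$ fixing that zero section, and connect it to its linear part by the fiberwise rescalings $\Phi_s(e):=s^{-1}\Phi(se)$: one computes $D\Phi_s|_e=D\Phi|_{se}$, so these are still volume-preserving (the fiberwise Jacobian at $e$ has modulus $1$) and, as $s$ runs over $(0,1]$, interpolate between $\Phi$ and the fiberwise-linear automorphism $L:=D\Phi|_{\text{zero section}}$, so that $\psi_s:=\psi_0\circ\Phi_s$ is a path of admissible choices joining $\psi_1$ to $\psi_0\circ L$. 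Under the natural normalization that the derivative of $\psi$ along the zero section be the identity — so that the quadratic part of $\Aa\circ\psi$ is the intrinsic Hessian $\Aa^{(2)}$ — one has $L=\id$ and the path joins $\psi_1$ to $\psi_0$; without it one is left comparing $\Uu^{\psi_0\circ L}_\gamma(\phi)$ and $\Uu^{\psi_0}_\gamma(\phi)$ across a fiberwise-linear volume-preserving $L$, which is harmless since a linear substitution leaves every Feynman diagram unchanged (the copies of $L$ cancel between propagators and vertices), leaves $\mu$ unchanged ($|\det L|=1$), and changes the Hessian only by a congruence, so its determinant and signature — hence the one-loop factor — are unaffected. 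It thus suffices to treat a path $\psi_s$.

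For the second step, I would fix such a path, set $S_s:=\Aa\circ\psi_s$, and introduce the fiberwise vector field $v_s:=\psi_s^{*}\bigl((\partial_s\psi_s)\circ\psi_s^{-1}\bigr)$ on $E$. Differentiating $\psi_s^*\dVol=\mu$ in $s$ and using naturality of the divergence shows $\operatorname{div}_\mu v_s=0$; since $\psi_s$ carries the zero section to $\gamma$ for all $s$, $v_s$ vanishes at $\gamma$; and the chain rule gives $\partial_s S_s=(dS_s)(v_s)$, which — being a one-form vanishing to first order at the critical point paired with a vector field vanishing there — vanishes to second order, so $\tfrac i\hbar(dS_s)(v_s)$ is a legitimate interaction insertion, with no tadpole and no quadratic term. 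Then, writing $\langle\,\cdot\,\rangle_{\psi_s}$ for insertion into the formal integral (as defined in \cref{section2}) computed in the coordinates $\psi_s$, so that $\Uu^{\psi_s}_\gamma(\phi)=\langle 1\rangle_{\psi_s}$, I would differentiate under the formal integral and integrate by parts:
\[
  \partial_s\,\Uu^{\psi_s}_\gamma(\phi) \;=\; \bigl\langle \tfrac i\hbar (dS_s)(v_s) \bigr\rangle_{\psi_s} \;=\; \bigl\langle \operatorname{div}_\mu v_s + \tfrac i\hbar (dS_s)(v_s) \bigr\rangle_{\psi_s} \;=\; 0 ,
\]
where the middle equality uses $\operatorname{div}_\mu v_s=0$ and the last uses that the integrand is the $\mu$-divergence of $e^{\frac i\hbar\Aa}v_s$, a total derivative, whose formal integral vanishes. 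Hence $s\mapsto\Uu^{\psi_s}_\gamma(\phi)$ is constant and $\Uu^{\psi_1}_\gamma(\phi)=\Uu^{\psi_0}_\gamma(\phi)$, which is the theorem.

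I expect the main obstacle to be the two inputs this uses: that the formal integral may be differentiated under the integral sign in a smooth family, and that the formal integral of a total $\mu$-derivative vanishes — the Schwinger–Dyson (Ward) identity. In finite dimensions both follow at once from Wick's theorem, and these are exactly the ``natural identities'' mentioned in the introduction; the real work is to know they survive the infinite-dimensional limit, where $\mu$ and $\operatorname{div}_\mu v_s$ are defined only through the diagrammatics and the total-derivative identity must be verified diagram by diagram, including the one-loop determinant and the regularized traces entering it. When the formal path integral carries no ultraviolet divergences these verifications go through and the above is an honest proof; in their presence one must further check that the chosen renormalization respects integration by parts, which is what this note does not settle.
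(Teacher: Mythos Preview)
Your proof is correct and follows the same overall strategy as the paper: reduce to a one-parameter family via a rescaling homotopy (your $\Phi_s(e)=s^{-1}\Phi(se)$ is exactly the homotopy of the paper's \cref{MainLemma}, since $D\Phi_s|_e=D\Phi|_{se}$ is the paper's $G(s,q)=f^{(1)}(sq)$), handle the residual linear piece by the affine invariance the paper also invokes, and then show invariance along the family using that the generating vector field is fiberwise divergence-free. The only real difference is in how the infinitesimal step is packaged. You invoke the Schwinger--Dyson identity --- that the formal integral of $\operatorname{div}_\mu\bigl(e^{\frac{i}{\hbar}S_s}v_s\bigr)$ vanishes --- as a single principle; the paper instead carries this identity out explicitly at the level of Feynman diagrams, using Fa\`a di Bruno to expand the vertices of $\tilde\Aa$ in terms of those of $\Aa$ and the Taylor coefficients of the infinitesimal generator $e$, and then cancelling the extra diagrams pairwise (propagator insertions against univalent-$e$ vertex insertions, bivalent-$\Aa$ vertices against bare edges via \cref{Frule2f}, univalent-$\Aa$ vertices against zero since $\gamma$ is critical), with the sole survivors being the self-loop traces of $e^{(n)}$, which vanish precisely because $\operatorname{div}_\mu e=0$. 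Your formulation is cleaner and makes the geometric content (integration by parts against a divergence-free vector field) transparent; the paper's explicit cancellation buys a diagram-by-diagram verification, which is what is actually required when ultraviolet divergences are present and the abstract Schwinger--Dyson identity cannot be assumed.
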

We call this paper an ``outline'' of a proof because when the analytic difficulties in step 3 are present, more work must be done to confirm that our diagrammatic manipulations in \cref{section4} are valid.  But our outline is an honest proof when these difficulties are absent.  This includes finite-dimensional integrals ($0$-dimensional QFT), when the argument is guaranteed to succeed because \cref{FeynmanIntegral} has an analytic definition, and quantum mechanics on a semi-Riemannian manifold (a certain $1$-dimensional QFT) \cite{meshort,melong}.

\section{Review of Feynman diagrams} \label{section2}

Let $M$ be a finite-dimensional compact smooth manifold with a chosen volume form $\dVol \in \Gamma(\Ddd_M \to M)$, and let $f: M \to \RR$ be a smooth function with isolated critical points.  Recall the ``big-$O$'' notation: if $x,y,z$ are expressions in $\hbar$, then ``$x = y + O(z)$'' means that $(x-y)/z$ is bounded as $\hbar \to 0$.  We also write ``$O(\hbar^\infty)$'' for ``$O(\hbar^n)$ for all $n$.''  Then \cite{EZ2007}:
\begin{equation} \label{asymp1}
  \int_M \exp\biggl( \frac i \hbar f\biggr)\dVol = \sum_{\substack{c \text{ a critical}\\ \text{point of }f}} \;\;\int_{\substack{\text{small nbhd of $c$}}} \limits  \exp\biggl( \frac i \hbar f\biggr)\dVol  \;\; +\;\; O(\hbar^\infty)
\end{equation}

Recall that for $n\geq 2$, if $M$ is a smooth manifold with no extra structure, the $n$th derivative $f^{(n)}$ of a smooth function $f: M \to \RR$ is not a well-defined coordinate-indendent object.  Rather, under changes of coordinates $f^{(n)}$ transforms in some affine way: there is a ``shift'' that depends linearly on $f^{(1)},\dots, f^{(n-1)}$.  In particular, the differential $f^{(1)} = \!\d f$ is always well-defined as a section of $\T^*M$, and if $c$ is a critical point of $f$ the \define{Hessian} $f^{(2)}(c)$ is well-defined as a symmetric element of $(\T^*_cM)^{\otimes 2}$.  A critical point $c$ is \define{nondegenerate} if the Hessian, thought of as a map $f^{(2)}(c): \T_cM \to \T^*_cM$, has trivial kernel.  The volume form $\dVol$ determines a \define{determinant} $\det_{\dVol}f^{(2)} = \bigl(f^{(2)}\bigr)^{\wedge \dim M} \cdot \bigl((\dVol)^{-1}\bigr)^{\otimes 2} \in \RR$, and it is non-zero if $c$ is nondegenerate.  The \define{signature} of $f^{(2)}$ is $\sign(f^{(2)}) = \dim_+ - \dim_-$, where $\dim_\pm$ is the dimension of the largest subspace of $\T_cM$ on which $\pm f^{(2)}(c) : (\T^*_cM)^{\otimes 2} \to \RR$ is positive-definite.  The leading asymptotics of \cref{asymp1} near nondegenerate critical points are:
\begin{equation} \label{asymp2}
  \int_{\text{small nbhd of }c} \limits e^{-(i\hbar)^{-1} f}\dVol = (2\pi \hbar)^{\dim M / 2} e^{\sign(f^{(2)})i\pi / 4}  e^{-(i\hbar)^{-1} f(c)} \bigl| \det f^{(2)}\bigr|^{-1/2} \bigl(1 + O(\hbar)\bigr)
\end{equation}
The reader should compare \cref{asymp2} with the standard formulas for Gaussian integrals.

\Cref{asymp2} is the farthest into the asymptotics of \cref{asymp1} that we can go explicitly without picking local coordinates on $M$.  For a sufficiently small neighborhood $\Oo$ of $c \in M$, by \cite{Moser1965} we can find coordinates $m: \Oo \to \RR^{\dim M}$ so that the volume form $\dVol$ is the pull back of the canonical volume form on $\RR^{\dim M}$.  With these coordinates, we expand $f$ in Taylor series around $c$: $f(c+x) = \sum_{n=0}^\infty f^{(n)}(c) \cdot x^{\otimes n}/n! + O(x^\infty)$, where $x\in \T_cM$ and $\cdot$ is the pairing between $f^{(n)}(c) : (\T_cM)^{\otimes n} \to \RR$ and $x^{\otimes n} \in (\T_cM)^{\otimes n}$.

Recall Penrose's graphical language \cite{Penrose1971} (made precise in \cite{JoyalStreet1}; it is implicit in Feynman's and Dyson's work on path integrals \cite{Feynman1948,Feynman1949a,Feynman1949b,Dyson1949a,Dyson1949b}).  Our convention will be to read diagrams from top to bottom.  We let a solid edge denote $\T_cM = \RR^{\dim M}$, and declare the following \define{Feynman rules}:
\begin{gather} \label{Frule1f}
  \begin{tikzpicture}[baseline=(X),green!50!black]
    \path node[dot] (O) {} ++(0pt,4pt) coordinate (X);
    \draw (O) -- ++(-15pt,15pt) +(0,3pt) node[anchor=base,text=black] {$\scriptstyle x_1$};
    \draw (O) -- ++(-6pt,15pt) +(0,3pt) node[anchor=base,text=black] {$\scriptstyle x_2$};
    \draw (O) -- ++(15pt,15pt) +(0,3pt) node[anchor=base,text=black] {$\scriptstyle x_n$};
    \path (O) ++(4pt,13pt) node[text=black] {$\scriptstyle \ldots$};
  \end{tikzpicture} 
   = -f^{(n)}(c)\cdot (x_1 \otimes \dots \otimes x_n) \\
   \begin{tikzpicture}[baseline=(X),green!50!black]
    \path coordinate (A) ++(0pt,4pt) coordinate (X);
    \path (A);
    \path (A) ++(20pt,0) coordinate (B);
    \draw (A) .. controls +(0,15pt) and +(0,15pt) .. (B);
  \end{tikzpicture}
   = \bigl(f^{(2)}(c)\bigr)^{-1}, \text{ i.e.\ }
   \begin{tikzpicture}[baseline=(X),green!50!black]
    \path node[dot] (O) {} ++(0pt,4pt) coordinate (X);
    \draw (O) -- ++(-10pt,18pt);
    \draw (O) .. controls +(10pt,15pt) and +(0,15pt) .. ++(15pt,0);
  \end{tikzpicture}   
  = -
  \begin{tikzpicture}[baseline=(X),green!50!black]
    \path coordinate (O) +(0pt,4pt) coordinate (X) +(-10pt,20pt) coordinate (t) +(15pt,0) coordinate (b);
    \draw (t) .. controls +(0pt,-8pt) and +(0pt,8pt) .. (b);
  \end{tikzpicture}   \label{Frule2f}
\end{gather}

A Feynman diagram is a combinatorial graph $G$ (it may be empty, disconnected, etc.).  More precisely, a \define{partition} of a finite set $N$ is a finite collection $S$ of pairwise-disjoint nonempty\footnote{We will never need zero-valent vertices, and use this (standard) definition to be compatible with \cref{FdB}, but one can easily modify the definition if zero-valent vertices are desired.} subsets of $N$ such that $\bigcup_{s\in S}s = N$, and a \define{Feynman diagram} is a finite collection $H$ of ``half-edges'' along with a finite partition $V$ of $H$ into blocks (the ``vertices'') and a finite partition $E$ of $H$ into blocks of size $2$ (the ``edges'').  An \define{isomorphism} of Feynman diagrams $\phi: (H,E,V) \to (H',E',V')$ is a bijection $H \to H'$ that induces bijections $E \to E'$ and $V \to V'$.  For a Feynman diagram $G$, we will write $\left| \Aut G\right|$ for the number of isomorphisms $G \to G$.\footnote{A more standard definition of a combinatorial graph is a finite set $V$ of ``vertices'' and an ``adjacency matrix'' $V \times V \to \NN$.  We prefer the definition we have given in terms of half edges because we want its corresponding notion of ``number of automorphisms.''  In particular, if $G$ is the diagram with one (bivalent) vertex and one edge connecting this vertex to itself, then we want $\left|\Aut G\right| = 2$.}  The \define{Euler characteristic} of a Feynman diagram is $\chi(H,E,V) = \left|V\right| - \left|E\right|$.  By placing a weight on each vertex and a balloon on each edge, we can \define{evaluate} each Feynman diagram in terms of the Feynman rules of \cref{Frule1f,Frule2f}; we will write $\ev(G) \in \RR$ for this evaluation.

Then, in terms of the coordinates $m: \Oo \to \T_cM$ and the above graphical language, we can fully describe the asymptotics of \cref{asymp2}:
\begin{multline} \label{asymp3}
  \int_{\Oo} e^{-(i\hbar)^{-1} f}\dVol = (2\pi \hbar)^{\dim M / 2} \times e^{\sign(f^{(2)})i\pi / 4} \times e^{-(i\hbar)^{-1} f(c)} \times \bigl| \det f^{(2)}\bigr|^{-1/2} \times \mbox{} \\
  \times \sum_{\substack{\text{Feynman diagrams }G \\ \text{with all vertices trivalent or higher}}} \frac{(i\hbar)^{-\chi(G)}\ev(G)}{\left|\Aut G\right|} \;\; + \;\; O(\hbar^\infty)
\end{multline}
For details, see e.g.\ \cite{Polyak2005}.  \Cref{asymp3} is equivalent to the following expression, which some readers might find more familiar \cite{EZ2007}, and which also requires coordinates to be defined:
\begin{multline*}
   \int_{\Oo} e^{-(i\hbar)^{-1} f}\dVol = \; O(\hbar^\infty) \; + \; (2\pi \hbar)^{\dim M / 2} \times e^{\sign(f^{(2)})i\pi / 4} \times \bigl| \det f^{(2)}(c)\bigr|^{-1/2} \times \mbox{} \\
   \mbox{} \times 
  \left. \exp \left( -\frac{i\hbar}2 \sum_{j,k = 1}^{\dim M} \bigl((f^{(2)})^{-1}\bigr)^{jk} \frac{\partial}{\partial m^j}\frac{\partial}{\partial m^k} \right)
  \exp \left(- (i\hbar)^{-1}\left( f(m) -  f^{(2)}(c) \cdot \frac{(m-c)^{\otimes 2}}2\right)\right)\right|_{m=c}
\end{multline*}

The idea of the formal path integral is to translate \cref{asymp3} to the infinite-dimensional setting of \cref{FeynmanIntegral}.  One must work to make sense of the dimension, Morse index, and determinant terms, but all the definitions that we know do not depend on the choice of coordinates in step 2 of \cref{section1}.  For the purposes of this note, we focus on the sum of diagrams:
\begin{defn} \label{MainDefn}
  Let $P \to X$ be a bundle, $\Aa: \Gamma(P\to X) \to \RR$ and action, $\phi \in \Gamma(\partial P \to \partial X)$ a Dirichlet boundary condition, and $\gamma \in \Gamma_\phi(P \to X)$ a critical section.  
  Pick a volume-preserving identification of an open neighborhood of the image of $\gamma$ with an open subbundle of $\T_\gamma\bigl(\Gamma_\phi(P \to X)\bigr)$.
    Then the \define{formal path integral} is the formal power series:
  \begin{equation} \label{formalintdefn}
    \Uu_\gamma(\phi) = \sum_{\substack{\text{Feynman diagrams }G \\ \text{with all vertices trivalent or higher}}} \frac{(i\hbar)^{-\chi(G)}\ev(G)}{\left|\Aut G\right|}
  \end{equation}
  The Feynman diagrams are evaluated via the following rules:
  \begin{gather} \label{Frule1}
  \begin{tikzpicture}[baseline=(X),green!50!black]
    \path node[dot] (O) {} ++(0pt,4pt) coordinate (X);
    \draw (O) -- ++(-15pt,15pt);
    \draw (O) -- ++(-6pt,15pt);
    \draw (O) -- ++(15pt,15pt);
    \path (O) ++(4pt,13pt) node[text=black] {$\scriptstyle \ldots$};
  \end{tikzpicture} 
   = -\Aa^{(n)}(\gamma) : \T_\gamma\bigl(\Gamma_\phi(P \to X)\bigr)^{\otimes n} \to \RR \\
   \begin{tikzpicture}[baseline=(X),green!50!black]
    \path coordinate (A) ++(0pt,4pt) coordinate (X);
    \path (A);
    \path (A) ++(20pt,0) coordinate (B);
    \draw (A) .. controls +(0,15pt) and +(0,15pt) .. (B);
  \end{tikzpicture}
   = \bigl((\Aa|_{\Gamma_\phi(P\to X)})^{(2)}(\gamma)\bigr)^{-1} \in \T_\gamma\bigl(\Gamma_\phi(P \to X)\bigr) \,\hat\otimes\, \T_\gamma \bigl(\Gamma_\phi(P \to X)\bigr)   \label{Frule2}
\end{gather}
\end{defn}
In \cref{Frule2}, the symbol $\hat\otimes$ denotes that the edge is valued in some completion of the tensor square, not in the algebraic tensor square itself.  We will ignore the problem that when evaluating diagrams in \cref{formalintdefn} one must compute infinite-dimensional traces, which usually diverge.

\section{On volume-preserving maps of bundles}  \label{section3}

By construction, $\Uu_\gamma(\phi)$ depends only on the value of the action $\Aa$ on a small open neighborhood of the image of the critical field $\gamma$.  Thus to prove \cref{MainThm}, it suffices to consider a vector bundle $\pi: Q \to X$, an open subbundle $P \subseteq Q$, and a volume-preserving but non-linear bundle map $f: P \to Q$.  A subset of a vector space $V$ is \define{star-shaped} if it is nonempty and for each $v\in V$ and for each $s\in [0,1]$, we have $sv \in V$; by shrinking $P$, we can suppose that the fibers $P_x = P \cap \pi^{-1}(x)$ are star-shaped for each $x\in X$.

We establish some notation.  $\tilde P = f(P)$ is the image bundle over $X$.  For each $x\in X$, $f_x: P_x \to \tilde P_x$ is the restriction of $f$ to the fibers over $x$, and we also use the letter $f$ for the induced map $\Gamma(P\to X) \to \Gamma(\tilde P \to X)$.  The initial action is $\Aa : \Gamma(P\to X) \to \RR$, and $\tilde \Aa = \Aa \circ f^{-1} : \Gamma(\tilde P \to X) \to \RR$ is the pushed-forward action.  Given $\phi \in \Gamma(\partial Q \to \partial X)$, let $\tilde \phi = f\circ \phi$.  Let $\gamma(\phi)$ be a choice of classical field extending $\phi$, and $\tilde\gamma(\tilde \phi) = f\circ \gamma(\phi)$.  In particular, $\tilde\gamma$ is a classical path for $\tilde \Aa$ extending $\tilde\phi$.   For fixed $\phi \in \Gamma(\partial P \to \partial X)$, our goal is to compare $\Uu_\gamma(\phi)$, the formal path integral determined by the action $\Aa$ and its classical path $\gamma(\phi)$, and $\tilde \Uu_{\tilde\gamma}(\tilde \phi)$, the formal path integral for $(\tilde\Aa,\tilde\gamma)$.

By construction, the formal path integral of \cref{MainDefn} is invariant under affine changes of coordinates.\footnote{A more honest definition of the formal path integral would come to grips with the determinant factor of \cref{asymp3}, and under any reasonable definition, this determinant is invariant under volume-preserving affine changes of coordinates, but not under non-volume-preserving maps.}  By applying a suitable translation, we can assume that $f(0) = 0$.  Then for each $x\in X$, the differential $\!\d f_x|_0 = (f_x)^{(1)}(0)$ is a linear volume-preserving map: $(f_x)^{(1)}(0) \in \GL(\T^{\fiber}_{(0,x)}Q) = \GL(Q_x)$ with $\det\bigl((f_x)^{(1)}(0)\bigr) = \pm 1$.  By composing $f: P \to \tilde P$ with $\{x \mapsto \bigl((f_x)^{(1)}(0)\bigr)^{-1} \} \in \Gamma( \GL(Q) \to X \bigr)$, we can suppose that for each $x \in X$, $(f_x)^{(1)}(0) = 1 \in \GL(Q_x)$.   It follows that $f: P \to Q$ is not just volume- but orientation-preserving on each fiber.  We say that a smooth (but non-linear) bundle map $f: P \to Q$ is \define{locally volume- and orientation-preserving} if for each $x\in X$ and for each $p\in P_x$, $\det\bigl((f_x)^{(1)}(p)\bigr) = 1$.  A locally volume-preserving map might fail to be globally so if it is not injective; by shrinking $P\subseteq Q$ to a smaller open neighborhood, we can always avoid this, and in any case our proof of \cref{MainThm} never relies on global properties of $f$.

We will henceforth simplify the notation: $f^{(n)}$ refers to the $n$th derivative of $f$ in the fiber direction, so that for example $f^{(1)}(q) = \bigl(f_{\pi(q)}\bigr)^{(1)}(q)$. 

\begin{lemma} \label{MainLemma}
  Let $\pi: Q \to X$ be a vector bundle and $P \subseteq Q$ an open subbundle such that for each $x\in X$, the fiber $P_x = P \cap \pi^{-1}(x)$ is star-shaped.  Then any locally volume- and orientation-preserving map $f: P \to Q$ with $f(0) = 0$ and $f^{(1)}(0) = 1$ is smoothly homotopic among such maps to the identity map $\id: P \mono Q$.
\end{lemma}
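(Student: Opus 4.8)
The plan is to use the radial-rescaling homotopy, a fiberwise version of the ``Alexander trick.'' For $t \in (0,1]$ define a bundle map $f_t : P \to Q$ over $X$ by $f_t(p) = t^{-1} f(tp)$, the scaling and the division both taking place inside the vector-space fiber $Q_x$ over $x = \pi(p)$, and set $f_0 = \id$. The first thing to verify is that this is well-defined: since each fiber $P_x$ is star-shaped, $sp \in P_x$ whenever $p \in P_x$ and $s\in[0,1]$, so $f(tp)$ makes sense for every $t\in[0,1]$; and because $Q$ is a vector bundle there is no obstruction to rescaling the output by $t^{-1}$ within $Q_x$. Clearly $f_1 = f$, and $f_t$ is indeed a bundle map over $X$ because $f$ is.

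Next I would check that every member of the family is ``such a map,'' which is a one-line computation. Working in a fiber, the chain rule gives $f_t^{(1)}(p) = t^{-1}\cdot f^{(1)}(tp)\cdot t = f^{(1)}(tp)$, so $\det f_t^{(1)}(p) = \det f^{(1)}(tp) = 1$ because $f$ is locally volume- and orientation-preserving; moreover $f_t(0) = t^{-1}f(0) = 0$ and $f_t^{(1)}(0) = f^{(1)}(0) = 1$. Thus each $f_t$ lies in the allowed class, trivially so for $t=0$.

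The only genuine point is the smoothness of the map $(t,p)\mapsto f_t(p)$ across $t=0$, where the formula $t^{-1}f(tp)$ looks singular. I would resolve this with the integral form of Taylor's theorem (Hadamard's lemma) in the fiber direction. Using $f(0) = 0$,
\begin{equation*}
  f(tp) = \int_0^1 \frac{d}{ds}\, f(stp)\, ds = t \int_0^1 f^{(1)}(stp)\cdot p \; ds ,
\end{equation*}
where by star-shapedness of $P_x$ one has $stp \in P_x$, so the integrand is defined for all $(s,t)\in[0,1]^2$ and all $p\in P$. Hence for $t\neq 0$ we have $f_t(p) = \int_0^1 f^{(1)}(stp)\cdot p\; ds$, and the right-hand side is a bundle map over $X$ that is jointly smooth in $(t,p)\in[0,1]\times P$; at $t=0$ it equals $\int_0^1 f^{(1)}(0)\cdot p\; ds = f^{(1)}(0)\cdot p = p$, matching $f_0 = \id$. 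This formula also re-derives the boundary values and the derivative identities of the previous paragraph without ever dividing by $t$. I expect this smoothness-at-the-origin check --- together with the observation that the star-shaped hypothesis is precisely what keeps $f(stp)$ inside the domain of $f$ --- to be the only part of the argument requiring care; the rest is bookkeeping.
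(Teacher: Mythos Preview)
Your proof is correct and is in fact the same homotopy the paper constructs: the paper specifies it implicitly by prescribing the fiberwise derivative $F(s,-)^{(1)}(q) = f^{(1)}(sq)$ together with $F(s,0)=0$, which on a star-shaped fiber integrates to exactly your explicit formula $F(s,q) = s^{-1}f(sq) = \int_0^1 f^{(1)}(usq)\cdot q\,du$. Your direct Alexander-trick presentation and the Hadamard-lemma argument for smoothness at $s=0$ make explicit what the paper abbreviates as ``by differentiating under the integral.''
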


\begin{proof}
  For each $q \in P$, $g(q) = f^{(1)}(q)$ is a linear map $Q_{\pi(q)} = \T_q^{\fiber}Q \to \T_{f(q)}^{\fiber}Q = Q_{\pi(q)}$.  Its derivative, a linear map $g^{(1)}(q) : \bigl(Q_{\pi(q)}\bigr)^{\otimes 2} \to Q_{\pi(q)}$, is naturally defined.  This map satisfies the following three conditions:
  \begin{enumerate}
    \item $g^{(1)}(q)$ is symmetric for each $q\in P$.
    \item $\det g(q) = 1$ for each $g\in P$.
    \item $g(0) = 1: Q_{\pi(q)} \to Q_{\pi(q)}$ for each $x\in X$.
  \end{enumerate}
  Conversely, since each fiber $P_x$ is star-shaped and hence contractible, any smooth family of linear maps $g(q) : Q_{\pi(q)} \to Q_{\pi(q)}$ satisfying condition 1 above determines a unique function $f: P \to Q$ with $f(0) = 0$ and $f^{(1)}(q) = g(q)$, and if $g$ satisfies conditions 2 and 3 then $f$ is locally volume- and orientation-preserving and $f^{(1)}(0) = 1$.
  
  Thus, let $f: P \to Q$ be locally volume- and orientation-preserving with $f^{(1)}(0) = 1$ and $f(0) = 0$, and for each $q\in P$ and each $s\in [0,1]$ let $G(s,q) = f^{(1)}(sq) : Q_{\pi(q)} \to Q_{\pi(q)}$.  Then $g = G(s,-)$ satisfies conditions 1--3 above.  Let $F(s,-)$ be the corresponding map $P \to Q$.  Then $F(1,-) = f$, whereas $F(0,-) = \id$ as its derivative is $F(0,-)^{(1)} = 1$.  By ``differentiating under the integral,'' $F$ is obviously smooth in $s$.  Thus $F$ is the desired homotopy.
\end{proof}

\section{A diagrammatic calculation} \label{section4}

In this section we complete the proof of \cref{MainThm}.  As in the previous section, we have a vector bundle $Q \to X$, an open star-shaped subbundle $P \to X$, an action $\Aa: \Gamma(P \to X) \to \RR$, a volume-preserving smooth bundle map $f: P \to Q$ with image $\tilde P$, and the pushed-forward action $\tilde\Aa = \Aa \circ f^{-1} \in \Gamma(\tilde P \to X)$, and our goal is to compare $\Uu_\gamma(\phi)$ and $\tilde\Uu_{\tilde \gamma}(\tilde\phi)$.  By \cref{MainLemma}, we may assume that $f$ is an infinitesimal transformation: $f(q) = q + e(q) + O(e^2)$, where $e \in \Gamma(\T^{\fiber}Q \to Q)$ is a fiberwise vector field.  Then $P = \tilde P$.

Recall the generalized chain rule by Fa\`a di Bruno \cite{Hardy2006}:
\begin{equation} \label{FdB}
  \bigl(\Aa \circ f^{-1}\bigr)^{(n)}\cdot \bigl(\xi_1 \otimes \xi_n\bigr) = \sum_{\text{partitions $S$ of }\{1,\dots,n\}} \bigl(\Aa^{(|S|)}\circ f^{-1}\bigr) \cdot \bigotimes_{s \in S} \Bigl( \bigl(f^{-1}\bigr)^{(|s|)} \cdot \bigotimes_{j \in s} \xi_j \Bigr)
\end{equation}
The partition determines how to contract (abbreviated ``$\cdot$'') the tensors in \cref{FdB}.

The Penrose graphical language expresses \cref{FdB} well.  As in \cref{Frule1,Frule2}, let:
\begin{equation} \label{Frule}
  \begin{tikzpicture}[baseline=(X),green!50!black]
    \path node[dot] (O) {} ++(0pt,4pt) coordinate (X);
    \draw (O) -- ++(-15pt,15pt) +(0,3pt) node[anchor=base,text=black] {$\scriptstyle \xi_1$};
    \draw (O) -- ++(-6pt,15pt) +(0,3pt) node[anchor=base,text=black] {$\scriptstyle \xi_2$};
    \draw (O) -- ++(15pt,15pt) +(0,3pt) node[anchor=base,text=black] {$\scriptstyle \xi_n$};
    \path (O) ++(4pt,13pt) node[text=black] {$\scriptstyle \ldots$};
  \end{tikzpicture} = - \Aa^{(n)}(\gamma) \cdot (\xi_1 \otimes \dots \otimes \xi_n) 
  \quad\quad\quad\quad
  \begin{tikzpicture}[baseline=(X),green!50!black]
    \path coordinate (A) ++(0pt,4pt) coordinate (X);
    \path (A);
    \path (A) ++(20pt,0) coordinate (B);
    \draw (A) .. controls +(0,15pt) and +(0,15pt) .. (B);
  \end{tikzpicture}
   = \bigl(\Aa^{(2)}\bigr)^{-1}
\end{equation}
Here $\xi_k \in \T_\gamma\bigl(\Gamma(Q \to X)\bigr)$.  We also introduce the following notation:
\begin{gather} \label{Prule1}
  \begin{tikzpicture}[baseline=(X),green!50!black]
    \path node[circle,inner sep=0pt,draw] (O) {$\scriptstyle \sim$} ++(0pt,4pt) coordinate (X);
    \draw (O) -- ++(-15pt,15pt) +(0,3pt) node[anchor=base,text=black] {$\scriptstyle \xi_1$};
    \draw (O) -- ++(-6pt,15pt) +(0,3pt) node[anchor=base,text=black] {$\scriptstyle \xi_2$};
    \draw (O) -- ++(15pt,15pt) +(0,3pt) node[anchor=base,text=black] {$\scriptstyle \xi_n$};
    \path (O) ++(4pt,13pt) node[text=black] {$\scriptstyle \ldots$};
  \end{tikzpicture} = - \tilde\Aa^{(n)}(\tilde\gamma) \cdot (\xi_1 \otimes \dots \otimes \xi_n)
  \quad\quad\quad\quad
  \begin{tikzpicture}[baseline=(X),green!50!black]
    \path coordinate (A) ++(0pt,4pt) coordinate (X);
    \path (A);
    \path (A) ++(20pt,0) coordinate (B);
    \draw (A) .. controls +(0,15pt) and +(0,15pt) .. node[fill=white,circle,inner sep=0pt,draw] {$\scriptstyle \sim$} (B);
  \end{tikzpicture} 
   = \bigl(\tilde\Aa^{(2)}\bigr)^{-1} \\
  \begin{tikzpicture}[baseline=(X),green!50!black]
    \path node[draw,rectangle,text=black,inner sep=1pt] (O) {$\scriptstyle f^{-1}$} ++(0pt,0pt) coordinate (X);
    \draw (O) -- ++(-15pt,15pt) +(0,3pt) node[anchor=base,text=black] {$\scriptstyle \xi_1$};
    \draw (O) -- ++(-6pt,15pt) +(0,3pt) node[anchor=base,text=black] {$\scriptstyle \xi_2$};
    \draw (O) -- ++(15pt,15pt) +(0,3pt) node[anchor=base,text=black] {$\scriptstyle \xi_n$};
    \path (O) ++(4pt,13pt) node[text=black] {$\scriptstyle \ldots$};
    \draw (O) -- ++(0pt,-15pt);
  \end{tikzpicture} = \Bigl\{ x \mapsto \bigl(f^{-1}\bigr)^{(n)}(\gamma(x)) \cdot \bigl(\xi_1(x) \otimes \dots \otimes \xi_n(x)\bigr) \Bigr\} \in \T_\gamma\bigl(\Gamma(Q \to X)\bigr) \label{Prule2}
\end{gather}
In the right-hand side of \cref{Prule2}, the derivatives of $f^{-1}$ are taken in the fiber direction, and this equation defines the $n$th derivative of $f^{-1}$ understood as a map $\Gamma(\tilde P\to X) \to \Gamma(P\to X)$; it is this infinite-dimensional derivative that is meant in \cref{FdB}.

Then \cref{FdB} reads:
\begin{equation}
  \begin{tikzpicture}[baseline=(X),green!50!black]
    \path node[circle,inner sep=0pt,draw] (O) {$\scriptstyle \sim$} ++(0pt,10pt) coordinate (X);
    \draw (O) -- ++(-15pt,30pt);
    \draw (O) -- ++(-6pt,30pt);
    \draw (O) -- ++(15pt,30pt);
    \path (O) ++(4pt,26pt) node[text=black] {$\scriptstyle \ldots$};
  \end{tikzpicture} =
  \sum 
  \begin{tikzpicture}[baseline=(X),green!50!black]
    \path node[dot] (O) {} ++(0pt,10pt) coordinate (X);
    \path (O)  ++(-15pt,15pt)  node[draw,rectangle,text=black,inner sep=1pt] (a1) {$\scriptstyle f^{-1}$};
    \path (O)  ++(15pt,15pt)  node[draw,rectangle,text=black,inner sep=1pt] (a2) {$\scriptstyle f^{-1}$};
    \draw (O) -- (a1);
    \draw (O) -- (a2);
    \path (O) ++(0pt,13pt) node[text=black] {$\scriptstyle \ldots$};
    \draw (a1) -- ++(0pt,15pt);
    \draw (a1) -- ++(4pt,15pt);
    \draw (a2) -- ++(0pt,15pt);
    \path (a1) ++(8pt,15pt) coordinate (b1);
    \path (a2) ++(-4pt,15pt) coordinate (b2);
    \path (a1) ++(-4pt,15pt) coordinate (b4);
    \draw (a2) -- ++(4pt,15pt);
    \draw (a1) -- (b2);
    \draw (a2) -- (b1);
    \draw (a2) -- (b4);
    \path (O) ++(2pt,29pt) node[text=black] {$\scriptstyle \ldots$};
  \end{tikzpicture} 
\end{equation}
The sum ranges over isomorphism classes of diagrams with ordered leaves but unordered \tikz[baseline=(a.base)] \node[rectangle,draw,green!50!black,text=black,anchor=base,inner sep=0pt] (a) {$f^{-1}$}; vertices.  The \tikz \node[dot,green!50!black] {}; vertex can be of arbitrary valence (non-zero, if the left-hand-side has non-zero valence), and each \tikz[baseline=(a.base)] \node[rectangle,draw,green!50!black,text=black,anchor=base,inner sep=0pt] (a) {$f^{-1}$}; vertex has one output strand and at least one input strand.  The \tikz \node[green!50!black,circle,inner sep=0pt,draw] (O) {$\scriptstyle \sim$}; vertex on the left-hand side and the  \tikz[baseline=(a.base)] \node[rectangle,draw,green!50!black,text=black,anchor=base,inner sep=0pt] (a) {$f^{-1}$}; vertices on the right hand side are evaluated at $\tilde\gamma$, and the \tikz \node[dot,green!50!black] {}; vertex on the right hand side is evaluated at $\gamma = f^{-1}\circ \tilde\gamma$.

But recall that $f(q) = q + e(q)$ and that we are free to work to $O(e^2)$ accuracy.  Then $f^{-1}(q) = q - e(q) + O(e^2)$, $\bigl(f^{-1}\bigr)^{(1)} = {\id} - e^{(1)}$, and $\bigl(f^{-1}\bigr)^{(n)} = -e^{(n)}$ for $n \geq 2$.  Writing 
\begin{equation}
\begin{tikzpicture}[baseline=(X),green!50!black]
    \path node[draw,circle,text=black,inner sep=1pt] (O) {$e$} ++(0pt,0pt) coordinate (X);
    \draw (O) -- ++(-15pt,15pt) +(0,3pt) node[anchor=base,text=black] {$\scriptstyle \xi_1$};
    \draw (O) -- ++(-6pt,15pt) +(0,3pt) node[anchor=base,text=black] {$\scriptstyle \xi_2$};
    \draw (O) -- ++(15pt,15pt) +(0,3pt) node[anchor=base,text=black] {$\scriptstyle \xi_n$};
    \path (O) ++(4pt,13pt) node[text=black] {$\scriptstyle \ldots$};
    \draw (O) -- ++(0pt,-15pt);
  \end{tikzpicture} = e^{(n)}(\gamma(x)) \cdot \bigl(\xi_1(x) \otimes \dots \otimes \xi_n(x)\bigr),
\end{equation}
we have:
\begin{equation}
  \begin{tikzpicture}[baseline=(X),green!50!black]
    \path node[circle,inner sep=0pt,draw] (O) {$\scriptstyle \sim$} ++(0pt,10pt) coordinate (X);
    \draw (O) -- ++(-15pt,30pt);
    \draw (O) -- ++(-6pt,30pt);
    \draw (O) -- ++(15pt,30pt);
    \path (O) ++(4pt,26pt) node[text=black] {$\scriptstyle \ldots$};
  \end{tikzpicture} =
  \begin{tikzpicture}[baseline=(X),green!50!black]
    \path node[dot] (O) {} ++(0pt,10pt) coordinate (X);
    \draw (O) -- ++(-15pt,30pt);
    \draw (O) -- ++(-6pt,30pt);
    \draw (O) -- ++(15pt,30pt);
    \path (O) ++(4pt,26pt) node[text=black] {$\scriptstyle \ldots$};
  \end{tikzpicture}  -
  \sum 
  \begin{tikzpicture}[baseline=(X),green!50!black]
    \path node[dot] (O) {} ++(0pt,10pt) coordinate (X);
    \path (O)  ++(-15pt,15pt) coordinate (a1);
    \path (O)  ++(15pt,15pt)  node[draw,circle,text=black,inner sep=1pt] (a2) {$e$};
    \draw (O) -- (a2);
    \draw (a1) ++(0pt,15pt) coordinate (c1);
    \draw (O) -- (c1);
    \draw (a1) ++(4pt,15pt) coordinate (c2);
    \draw (O) -- (c2);
    \draw (a2) -- ++(0pt,15pt);
    \path (a1) ++(8pt,15pt) coordinate (b1);
    \path (a2) ++(-4pt,15pt) coordinate (b2);
    \path (a1) ++(-4pt,15pt) coordinate (b4);
    \draw (a2) -- ++(4pt,15pt);
    \draw (O) -- (b2);
    \draw (a2) -- (b1);
    \draw (a2) -- (b4);
    \path (O) ++(2pt,29pt) node[text=black] {$\scriptstyle \ldots$};
  \end{tikzpicture} + O(e^2) \label{proof1}
\end{equation}
The sum in \cref{proof1} ranges over diagrams with precisely one \tikz[baseline=(e.base)] \node[draw,circle,inner sep=1pt,anchor=base,green!50!black,text=black] (e) {$e$}; vertex, which must have at least one input string.  The \tikz \node[dot,green!50!black] {}; vertex on the right-hand side can be of arbitrary valence.  By \cref{Frule2,Prule1}, we also have:
\begin{equation}
  \begin{tikzpicture}[baseline=(X),green!50!black]
    \path coordinate (A) ++(0pt,4pt) coordinate (X);
    \path (A);
    \path (A) ++(20pt,0) coordinate (B);
    \draw (A) .. controls +(0,15pt) and +(0,15pt) .. node[fill=white,circle,inner sep=0pt,draw] {$\scriptstyle \sim$} (B);
  \end{tikzpicture}  
  =
  \begin{tikzpicture}[baseline=(X)]
    \path coordinate (A) ++(0pt,2pt) coordinate (X);
    \path (A) ++(20pt,0) coordinate (B);
    \draw[green!50!black] (A) .. controls +(0,15pt) and +(0,15pt) .. (B);
  \end{tikzpicture}
  +
    \begin{tikzpicture}[baseline=(X)]
    \path coordinate (A) ++(0pt,2pt) coordinate (X);
    \path (A) ++(20pt,0) coordinate (B);
    \draw[green!50!black] (A) .. controls +(0,15pt) and +(0,15pt) .. (B);
    \path (B) node[draw,circle,green!50!black,text=black,inner sep=1pt,fill=white] (E) {$e$};
    \draw[green!50!black] (E) -- ++(0pt,-10pt);
  \end{tikzpicture}
  +
    \begin{tikzpicture}[baseline=(X)]
    \path coordinate (A) ++(0pt,2pt) coordinate (X);
    \path (A) ++(20pt,0) coordinate (B);
    \draw[green!50!black] (A) .. controls +(0,15pt) and +(0,15pt) .. (B);
    \path (A) node[draw,circle,green!50!black,text=black,inner sep=1pt,fill=white] (E) {$e$};
    \draw[green!50!black] (E) -- ++(0pt,-10pt);
  \end{tikzpicture}
  + O(e^2) \label{proof2}
\end{equation}

Finally, consider the sum from \cref{asymp3} of trivalent-and-higher diagrams made from the ingredients in \cref{Prule1}.  By \cref{proof1,proof2}, this sum is equal to the sum of diagrams made from the ingredients in \cref{Frule} plus some extra diagrams with \tikz[baseline=(e.base)] \node[draw,circle,inner sep=1pt,anchor=base,green!50!black,text=black] (e) {$e$}; vertices; to order $O(e^2)$, each extra diagram has exactly one such vertex.  But many of these extra diagrams cancel.  In particular, the \tikz[baseline=(e.base)] \node[draw,circle,inner sep=1pt,anchor=base,green!50!black,text=black] (e) {$e$}; vertices appear with opposite signs in \cref{proof1,proof2}, and so we can cancel any extra diagram coming from \cref{proof2} with an extra diagram coming from \cref{proof1} in which the \tikz[baseline=(e.base)] \node[draw,circle,inner sep=1pt,anchor=base,green!50!black,text=black] (e) {$e$}; vertex has precisely one input string.  (One must check whenever we claim some diagrams cancel that those diagrams appear in \cref{asymp3} with the same multiplicities / symmetry factors $\left| \Aut G\right|$, but this follows from some very simple combinatorics.)

In \cref{proof1}, the \tikz \node[dot,green!50!black] {}; vertex can have arbitrary valence, and in particular it can have valence two, provided the corresponding \tikz[baseline=(e.base)] \node[draw,circle,inner sep=1pt,anchor=base,green!50!black,text=black] (e) {$e$}; vertex has at least two input strings, or it can have valence one, provided the \tikz[baseline=(e.base)] \node[draw,circle,inner sep=1pt,anchor=base,green!50!black,text=black] (e) {$e$}; vertex has at least three inputs.  But each input to the \tikz[baseline=(e.base)] \node[draw,circle,inner sep=1pt,anchor=base,green!50!black,text=black] (e) {$e$}; vertex will be one end of an edge from \cref{Frule2}, and in particular an element of $\T_\gamma\bigl(\Gamma_\phi(P\to X)\bigr)$, and so:
\begin{equation} \label{proof3}
  \begin{tikzpicture}[baseline=(X),green!50!black]
    \path node[dot] (O) {} ++(0pt,12pt) coordinate (X);
    \draw (O) -- ++(-10pt,18pt) node[draw,circle,text=black,inner sep=1pt,fill=white] (E) {$e$};
    \draw (O) .. controls +(10pt,15pt) and +(0,15pt) .. ++(15pt,0);
    \draw (E) -- ++(-10pt,15pt);
    \draw (E) -- ++(-4pt,15pt);
    \draw (E) -- ++(10pt,15pt);
    \path (E) ++(2pt,13pt) node[text=black] {$\scriptstyle \ldots$};
  \end{tikzpicture}  
  = -
  \begin{tikzpicture}[baseline=(X),green!50!black]
    \path node[white,dot] (O) {} +(0pt,12pt) coordinate (X) +(-10pt,20pt) coordinate (t) +(15pt,0) coordinate (b);
    \draw (t) .. controls +(0pt,-8pt) and +(0pt,8pt) .. (b);
    \path (O) ++(-10pt,18pt) node[draw,circle,text=black,inner sep=1pt,fill=white] (E) {$e$};
    \draw (E) -- ++(-10pt,15pt);
    \draw (E) -- ++(-4pt,15pt);
    \draw (E) -- ++(10pt,15pt);
    \path (E) ++(2pt,13pt) node[text=black] {$\scriptstyle \ldots$};
  \end{tikzpicture}
  \quad\quad\quad\quad
  \begin{tikzpicture}[baseline=(X),green!50!black]
    \path node[dot] (O) {} ++(0pt,12pt) coordinate (X);
    \draw (O) -- ++(0pt,18pt) node[draw,circle,text=black,inner sep=1pt,fill=white] (E) {$e$};
    \draw (E) -- ++(-10pt,15pt);
    \draw (E) -- ++(-4pt,15pt);
    \draw (E) -- ++(10pt,15pt);
    \path (E) ++(2pt,13pt) node[text=black] {$\scriptstyle \ldots$};
  \end{tikzpicture}  = 0
\end{equation}
The first part of \cref{proof3} is just the second part of \cref{Frule2f}, which applies in this context when the inputs are elements of $\T_\gamma\bigl(\Gamma_\phi(P\to X)\bigr) \subseteq \T_\gamma\bigl(\Gamma(P\to X)\bigr)$.  The second part of \cref{proof3} is the statement that $\gamma$ is a critical point for $\Aa|_{\Gamma_\phi(P\to X)}$, and again requires that the inputs be in $\T_\gamma\bigl(\Gamma_\phi(Q\to X)\bigr)$.  This certainly occurs: the \tikz[baseline=(e.base)] \node[draw,circle,inner sep=1pt,anchor=base,green!50!black,text=black] (e) {$e$}; vertices each have at least one input string, which is contracted with an edge $\bigl(\Aa^{(2)}\bigr)^{-1}$.
Thus we can cancel almost all the diagrams.

Indeed, the only diagrams left to cancel are those with a component of the form:
\begin{equation}
   -\begin{tikzpicture}[baseline=(X),green!50!black]
    \path node[dot] (O) {} ++(0pt,12pt) coordinate (X);
    \draw (O) -- ++(-10pt,18pt) node[draw,circle,text=black,inner sep=1pt,fill=white] (E) {$e$};
    \draw (O) .. controls +(15pt,15pt) and +(10pt,15pt) .. (E);
    \draw (E) -- ++(-10pt,15pt);
    \draw (E) -- ++(4pt,15pt);
    \path (E) ++(-3pt,13pt) node[text=black] {$\scriptstyle \ldots$};
  \end{tikzpicture}  , \text{ at least one external string}
\end{equation}
This is a ``trace'' of a derivative of $e$.  Indeed, by \cref{Frule2}, we have, formally:
\begin{equation} \label{proofn}
   -\begin{tikzpicture}[baseline=(E.base),green!50!black]
    \path node[dot] (O) {} ++(0pt,15pt) coordinate (X);
    \draw (O) -- ++(-10pt,18pt) node[draw,circle,text=black,inner sep=1pt,fill=white] (E) {$e$};
    \draw (O) .. controls +(15pt,15pt) and +(10pt,15pt) .. (E);
    \draw (E) -- ++(-10pt,15pt) +(0,3pt) node[anchor=base,text=black] {$\scriptstyle \xi_1$};
    \draw (E) -- ++(4pt,15pt) +(0,3pt) node[anchor=base,text=black] {$\scriptstyle \xi_n$};
    \path (E) ++(-3pt,13pt) node[text=black] {$\scriptstyle \ldots$};
  \end{tikzpicture}  
  = \sum_{x\in X \sminus \partial X}  \tr_{\T^{\fiber}_{\gamma(x)}Q} \Bigl\{ v \mapsto e^{(n)}(\gamma(x)) \cdot \bigl(\xi_1(x) \otimes \dots \otimes \xi_n(x) \otimes v \bigr) \Bigr\}
\end{equation}
The sum becomes an honest integral in theories with no ultraviolet divergences; c.f.\ \cite{melong}.  But $f = \id \!+ e: P \to Q$ is locally volume-preserving if and only if $\tr_{\T^{\fiber}_qQ} \bigl\{ v \mapsto e^{(1)}(q)\cdot v \bigr\} = 0$ for all $q$, and so the summands in \cref{proofn} all vanish.

Thus the sums of diagrams for the path integrals $\Uu,\tilde\Uu$ in the two coordinates systems agree up to $O(e^2)$ accuracy.  This completes the proof of \cref{MainThm}.  We remark that if $q \mapsto q + e(q)$ were not locally volume-preserving (up to a constant), the diagrams in \cref{proofn} would not vanish, and the value of the formal path integral would change.  This is just as expected from an ``integral.''

\section{Acknowledgments}

This note is part of a larger project \cite{meshort,melong} suggested by N.\ Reshetikhin, and it is with great pleasure that I thank him for providing support, advice, and ideas throughout.  I would also like to thank G.\ Kuperberg, G.\ Thompson, and C.\ Schommer-Pries for their clarifying discussions, some of which occurred on the on-line discussion forum \url{mathoverflow.net}.   This work was supported by the NSF grant DMS-0901431.

\bibliography{Edited}{}
\bibliographystyle{plain}

\end{document}